\def\BibTeX{{\rm B\kern-.05em{\sc i\kern-.025em b}\kern-.08em
    T\kern-.1667em\lower.7ex\hbox{E}\kern-.125emX}}
\begin{document}

\title{Concave Aspects of Submodular Functions}

\author{%
  \IEEEauthorblockN{Rishabh Iyer}
  \IEEEauthorblockA{University of Texas at Dallas, CSE Department\\
                    2601 N. Floyd Rd. MS EC31\\
                    Richardson, TX 75083 \\
                    Email: rishabh.iyer@utdallas.edu}
  \and
  \IEEEauthorblockN{Jeff Bilmes}
  \IEEEauthorblockA{University of Washington, ECE Department\\
                    185 E Stevens Way NE\\
                    Seattle, WA 98195 \\
                    Email: bilmes@ee.washington.edu}
}

\newcommand{\union}{\cup}
\newcommand{\inter}{\cap}
\newcommand{\symmdiff}{\bigtriangleup} 
\newcommand{\lovasz}{Lov\'asz}
\newcommand{\Ph}{\widehat{\mathcal{P}}}
% \lovasz extension of a function, can just do $\lex f(x)$
\newcommand{\lex}[1]{{\ensuremath \breve #1}}
% multi-linear extension of a function, can just do $\mex f(x)$
\newcommand{\mex}[1]{{\ensuremath \tilde #1}}
% concave extension of a function, can just do $\cex f(x)$
\newcommand{\cex}[1]{{\ensuremath \invbreve #1}}

\newcommand{\ggrow}{\hat{g}} %% g^1
\newcommand{\gshrink}{\check{g}}  %% g^2
\newcommand{\gthree}{\bar{g}} %% g^3

\newtheorem{theorem}{Theorem}
\newtheorem{lemma}[theorem]{Lemma}
\newtheorem{proposition}[theorem]{Proposition}
\newtheorem{observation}{Observation}[section]
\newtheorem{corollary}[theorem]{Corollary}
\newtheorem{definition}{Definition}[section]
\newtheorem{example}{Example}[section]

\maketitle

\begin{abstract}
Submodular Functions are a special class of set functions, which generalize several information-theoretic quantities such as entropy and mutual information~\cite{bilmeskarbasiISIT2018}. Submodular functions have subgradients and subdifferentials~\cite{fujishige1984subdifferential} and admit polynomial time algorithms for
  minimization, both of which are fundamental
  characteristics of convex functions. Submodular functions also show
  signs similar to concavity. Submodular function maximization, though
  NP hard, admits constant factor approximation guarantees and concave
  functions composed with modular functions are submodular. In this
  paper, we try to provide a more complete picture on the relationship
  between submodularity with concavity.  We  characterize the
  superdifferentials and polyhedra associated with upper bounds and provide optimality conditions for submodular maximization using the superdifferentials. This paper is a concise and shorter version of our longer preprint~\cite{iyer2015polyhedral}.
\end{abstract}

\begin{IEEEkeywords}
Submodular Functions, Sub-differentials, Super-differentials, Convexity and Concavity\end{IEEEkeywords}

\section{Introduction}
Long known to be an important property for problems in
combinatorial optimization, economics, operations research, and game
theory, submodularity is gaining popularity in a number of new areas including Machine Learning and Information Theory. Submodular Functions are defined over subsets $X$ of a ground-set $V$. Let $V = \{1, \cdots, n\}$ be a set of items, then a set function $f: 2^V \to \mathbb R$ over a ground set $V = \{1, 2,
\cdots, n\}$ is \emph{submodular} if for all subsets $S, T \subseteq
V$, it holds that, $f(S) + f(T) \geq f(S \cup T) + f(S \cap
T)$. Equivalently, a submodular set function satisfies
\emph{diminishing marginal returns}: Let $f(j | S) = f(S \cup \{ j \}) -
f(S)$ denote the marginal cost of element $j\in V$ with respect to $S
\subseteq V$.\footnote{We also use this notation for sets $A,B$ as in $f(A|B) = f(A\cup B) - f(B)$.}
The diminishing returns property states that, $f(j | S) \geq f(j | T), \forall S \subseteq T \text{ and } j \notin T$. Given a set of random variables $\mathcal X_1, \cdots, \mathcal X_n$, define $H(\mathcal X_S)$ as the joint Entropy of variables indexed by subset $S$: i.e. $\mathcal X_S = \{\mathcal X_i, i \in S\}$. Its easy to see that $H(\mathcal X_S)$ is submodular~\cite{bilmeskarbasiISIT2018}. Similarly, the Mutual Information $I(\mathcal X_S; \mathcal X_{V \backslash S}) $ is also a submodular function. Submodular Functions are increasingly becoming prevalent in machine learning applications including data selection~\cite{wei2015submodularity,liu2015svitchboard}, summarization~\cite{gygli2015video,tsciatchek14image,bairi2015summarization}, observation selection and sensor placement~\cite{krause2008efficient} to name a few.

Submodular functions have been strongly associated with convex functions, to the extent that submodularity is sometimes regarded as a discrete analogue
of convexity~\cite{fujishige2005submodular}. This relationship is evident by the fact that submodular function minimization is polynomial time (akin to convex minimization). A number of recent results, however, make this relationship much more formal. For example, similar to convex functions, submodular functions have tight modular lower bounds and admit a sub-differential characterization~\cite{fujishige1984subdifferential}. Moreover, it is possible~\cite{fujishige1984theory} to provide optimality conditions for submodular minimization, in a manner analogous to the Karush-Kuhn-Tucker (KKT) conditions from convex programming. In addition, submodular functions also admit a natural convex extension, known as the \lovasz{} extension, which is easy to evaluate~\cite{lovasz1983} and optimize. \looseness-1

Submodular functions also have some properties, which are unlike convexity, but perhaps more akin to concavity. Submodular function maximization is known to be NP hard. However, there exist a number of constant factor approximation algorithms based on simple greedy or local search hueristics~\cite{janvondrak, lee2009non, nemhauser1978} and some recent continuous approximation methods~\cite{chekuri2011submodular}. This is unlike convexity where 
maximization can be hopelessly difficult~\cite{sahni1974computationally}.
Furthermore, submodular functions have a diminishing returns property which is akin to concavity, and concave over modular functions are known to be submodular. In addition submodular function has been shown to have tight modular upper bounds~\cite{rkiyersemiframework2013,jegelkacvpr,rkiyersubmodBregman2012,rkiyeruai2012,nipssubcons2013}, and as we show, form superdifferentials and supergradients like concave functions. The multi-linear extension of a submodular function, which has become useful recently~\cite{chekuri2011submodular} in the context of submodular maximization, is known to be concave when restricted to a particular direction. All these seem to indicate that submodular functions are related to both convexity and concavity, and have some strange properties enabling them to get the best of both classes of functions. We formalize all these relationships in this paper.\looseness-1

\subsection{Our Contributions}
The main contributions of this work is in providing the first systematic theoretical study related to polyhedral aspects of submodular function maximization and connections to concavity. We show that submodular functions have tight modular (additive) upper bounds, thereby proving the existence of the superdifferential of a submodular function. We show that characterizing this superdiffereitial is NP hard in general. However, we provide a series of (successively tighter) outer and also inner polyhedral bounds, all obtainable in polynomial time, and also show that we can obtain some specific practically useful supergradients in polynomial time.
We then show how we can define forms of
 optimality conditions for submodular maximization through the submodular superdifferential. We also show how optimality conditions related to approximations to the superdifferential lead to a number of familiar approximation guarantees for these problems. 

\section{Submodularity and Convexity} \label{subconvex}
Most of the results in this section are from~\cite{lovasz1983, fujishige2005submodular} and the references contained therein, so for more details please refer to these texts. We use this section to review existing work on the connections between submodularity and convexity, and to help contrast these with the corresponding results on the connections between submodularity and concavity.

\subsection{Submodular (Lower) Polyhedron}
\label{submodpoly}
For a submodular function $f$, the (lower) submodular
polyhedron and the base polytope of a
submodular function \cite{fujishige2005submodular} are defined,
respectively, as: $\mathcal P_f = \{ x : x(S) \leq f(S), \forall S \subseteq V \}$, and 
$\mathcal B_f = \mathcal P_f \cap \{ x : x(V) = f(V) \}$. The submodular polyhedron has a number of interesting properties. An important property of the polyhedron is that the extreme points and facets can easily be characterized even though the polyhedron itself is described by a exponential number of inequalities. In fact, surprisingly, every extreme point of the submodular polyhedron is an extreme point of the base polytope. These extreme points admit an interesting characterization in that they can be computed via a simple greedy algorithm
--- let $\sigma$ be a permutation of $V = \{1, 2, \cdots, n\}$. Each such permutation defines a chain with elements $S^{\sigma}_0 = \emptyset$, 
$S^{\sigma}_i = \{ \sigma(1), \sigma(2), \dots, \sigma(i) \}$ such that $S^{\sigma}_0 \subseteq S^{\sigma}_1 \subseteq \cdots \subseteq S^{\sigma}_n$. This chain defines an extreme point
$h^{\sigma}$ of $\mathcal P_f$ with entries: $h^{\sigma}(\sigma(i)) = 
f(S^{\sigma}_i) - f(S^{\sigma}_{i-1})$. 
Each permutation of $V$ characterizes an extreme point of $\mathcal P_f$ and all possible extreme points of $\mathcal P_f$ can be characterized in this manner~\cite{fujishige2005submodular}. Given a submodular function $f$ such that $f(\emptyset) = 0$, the condition that $x \in \mathcal P_f$ can be checked in polynomial time for every $x$. 
\begin{proposition} \cite{fujishige2005submodular}
Given a submodular function $f$, checking if $x \in \mathcal P_f$ is equivalent to the condition $\min_{X \subseteq V} [f(X) - x(X)] \geq 0$, which can be checked in poly-time.
\end{proposition}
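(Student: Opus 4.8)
The plan is to separate the statement into two independent claims: the set-theoretic \emph{equivalence} between membership in $\mathcal P_f$ and the nonnegativity of a minimum, and the \emph{computational} claim that this minimum can be evaluated in polynomial time. The first is essentially definitional, while the second reduces to submodular function minimization.

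First I would establish the equivalence directly from the definition. By definition $x \in \mathcal P_f$ holds precisely when $x(S) \le f(S)$ for every $S \subseteq V$, equivalently when $f(S) - x(S) \ge 0$ for all $S$. A finite family of inequalities $\{\, f(S) - x(S) \ge 0 \,\}_{S \subseteq V}$ holds simultaneously if and only if the smallest of these quantities is nonnegative, which is exactly the condition $\min_{X \subseteq V}[f(X) - x(X)] \ge 0$. This direction uses nothing about $f$ beyond the definition of the polyhedron, so it is immediate.

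Next I would argue that the right-hand minimization is tractable. The key observation is that the function $g(X) := f(X) - x(X)$ is itself submodular: since $x(X) = \sum_{j \in X} x(j)$ is modular, hence both submodular and supermodular, subtracting it from the submodular $f$ preserves submodularity. I would verify this by substituting $g$ into the defining inequality $g(S) + g(T) \ge g(S \union T) + g(S \inter T)$ and observing that the modular contributions of $x$ cancel on both sides, leaving precisely the submodular inequality for $f$. Hence $\min_{X \subseteq V} g(X)$ is an instance of unconstrained submodular function minimization.

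Finally, invoking the fact that submodular function minimization admits polynomial-time algorithms --- one of the convex-analytic hallmarks of submodularity reviewed above --- the minimum value, and in particular its sign, can be computed in polynomial time, which is what the proposition asserts. I do not anticipate any genuine obstacle here: the only nontrivial ingredient is the poly-time solvability of submodular minimization, which is a standard external result, while everything else amounts to a one-line unfolding of the definition together with the routine check that adding a modular function keeps a submodular function submodular.
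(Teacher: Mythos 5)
Your proof is correct: the paper states this proposition without proof (citing Fujishige), and your argument --- the definitional equivalence between membership in $\mathcal P_f$ and nonnegativity of the minimum, the observation that $X \mapsto f(X) - x(X)$ remains submodular since $x$ is modular, and the appeal to polynomial-time submodular function minimization --- is exactly the standard argument the citation intends. No gaps; this matches the intended approach.
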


\subsection{The Submodular Subdifferential}
\label{sec:subm-subd}
The subdifferential $\partial_f(X)$ of a submodular set function $f:
2^V \to \mathbb{R}$ for a set $Y \subseteq V$ is defined 
\cite{fujishige1984subdifferential, fujishige2005submodular}
analogously
to the subdifferential of a continuous convex function:
$\partial_f(X) = \{x \in \mathbb{R}^n: f(Y) - x(Y) \geq f(X) - x(X)\;\text{for all } Y \subseteq V\}$. 
The polyhedra above can be defined for any (not necessarily submodular) set function. When the function is submodular however, it can be characterized efficiently.
Firstly, note that for normalized submodular functions, for any $h_X \in \partial_f(X)$, we have $f(X) - h_X(X) \leq 0$
which follows by the constraint at $Y=\emptyset$.
The extreme points of the submodular subdifferential admit interesting characterizations. We shall denote a subgradient at $X$ by $h_X \in \partial_f(X)$. The extreme points of $\partial_f(Y)$ may be computed via a greedy algorithm: Let $\sigma$ be a permutation of $V$ that assigns the elements in $X$ to the first $|X|$ positions ($i \leq |X|$ if and only if $\sigma(i) \in X$) and 
$S^{\sigma}_{|X|} = X$. This chain defines an extreme point
$h^{\sigma}_X$ of $\partial_f(X)$ with entries: 
$h^{\sigma}_X(\sigma(i)) = 
f(S^{\sigma}_i) - f(S^{\sigma}_{i-1})$. Note that for every subgradient $h_X \in \partial_f(X)$, we can define a modular lower bound $m_X(Y) = f(X) + h_X(Y) - h_X(X), \forall Y \subseteq V$, which satisfies $m_X(Y) \leq f(Y), \forall Y \subseteq V$. Moreover, we have that $m_X(X) = f(X)$, and hence the subdifferential exactly correspond to the set of tight modular lower bounds of a submodular function, at a given set $X$. If we choose $h_X$ to be an extreme subgradient, the modular lower bound becomes $m_X(Y) = h_X(Y)$, resulting in a normalized modular function\footnote{A set function $h$ is said to be normalized if $h(\emptyset) = 0)$.}.

The subdifferential is defined via an exponential number of inequalities. A key observation however is that many of these inequalities are redundant. Define three polyhedra: $\partial_f^1(X), \partial_f^2(X)$ and $\partial_f^3(X)$ where the first polyhedra is defined via inequalities of all subsets $Y \subseteq X$, the second via inequalities of all subsets $Y \supseteq X$ and the third comprising of all other inequalities. We immediately have that $\partial_f(X) = \partial_f^1(X) \cap \partial_f^2(X) \cap \partial_f^3(X)$. However, when $f$ is submodular, the inequalities in $\partial_f^3(X)$ are in fact redundant in characterizing $\partial_f(X)$. 

\begin{theorem} \cite{fujishige2005submodular} \label{subdiff-reduceineq} Given a submodular function $f$, $\partial_f(X) = \partial_f^1(X) \cap \partial_f^2(X)$. 
\end{theorem}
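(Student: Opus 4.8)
The plan is to prove only the nontrivial inclusion $\partial_f^1(X) \cap \partial_f^2(X) \subseteq \partial_f(X)$, since the reverse inclusion is immediate from the decomposition $\partial_f(X) = \partial_f^1(X) \cap \partial_f^2(X) \cap \partial_f^3(X)$ already noted before the theorem. Concretely, I would fix an $x$ satisfying every inequality indexed by a set $Y \subseteq X$ and every inequality indexed by a set $Y \supseteq X$, and then show it automatically satisfies the defining inequality $f(Y) - x(Y) \geq f(X) - x(X)$ for an arbitrary $Y$ that is neither contained in nor contains $X$, i.e.\ for an arbitrary inequality from $\partial_f^3(X)$.

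The key step is to route the arbitrary $Y$ through its meet $X \cap Y$ and join $X \cup Y$ with $X$. Since $X \cap Y \subseteq X$, its inequality lies in $\partial_f^1(X)$, and since $X \cup Y \supseteq X$, its inequality lies in $\partial_f^2(X)$, so both are available by assumption. The crux is that submodularity supplies the \emph{inequality} $f(Y) \geq f(X \cup Y) + f(X \cap Y) - f(X)$, whereas the modularity (additivity) of $x$ supplies the exact \emph{identity} $x(Y) = x(X \cup Y) + x(X \cap Y) - x(X)$. Subtracting the identity from the inequality regroups the terms as $f(Y) - x(Y) \geq [f(X \cup Y) - x(X \cup Y)] + [f(X \cap Y) - x(X \cap Y)] - [f(X) - x(X)]$.

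From here I would apply the two available bounds: the $\partial_f^2(X)$ inequality at $X \cup Y$ gives $f(X \cup Y) - x(X \cup Y) \geq f(X) - x(X)$, and the $\partial_f^1(X)$ inequality at $X \cap Y$ gives $f(X \cap Y) - x(X \cap Y) \geq f(X) - x(X)$. Substituting both into the previous display, the three copies of $f(X) - x(X)$ telescope and leave exactly $f(Y) - x(Y) \geq f(X) - x(X)$, which is the desired $\partial_f^3(X)$ inequality. As $Y$ was arbitrary, this establishes the claimed equality.

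I do not anticipate a genuine obstacle here; the only delicate point is the alignment of directions. Submodularity must be used as a \emph{lower} bound on $f(Y)$ while the modular function contributes an equality, so that the two surplus terms $f(X \cup Y) - x(X \cup Y)$ and $f(X \cap Y) - x(X \cap Y)$ can each be bounded below and exactly one residual $-[f(X) - x(X)]$ cancels against them. The argument invokes submodularity precisely once, which also clarifies why the reduction of inequalities would fail for a general (non-submodular) set function.
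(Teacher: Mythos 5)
Your proof is correct: the paper itself states this theorem with only a citation to Fujishige and provides no proof, and your argument --- combining submodularity $f(Y) \geq f(X \cup Y) + f(X \cap Y) - f(X)$ with modularity $x(Y) = x(X \cup Y) + x(X \cap Y) - x(X)$, then applying the $\partial_f^2(X)$ inequality at $X \cup Y$ and the $\partial_f^1(X)$ inequality at $X \cap Y$ so the copies of $f(X) - x(X)$ telescope --- is precisely the standard proof in the cited reference. Your closing remark is also apt: submodularity enters exactly once, in the right direction, which is why the reduction fails for general set functions.
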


The subdifferential at the emptyset has a special relationship since
$\partial_f(\emptyset) = \mathcal P_f$.  Similarly
$\partial_f(V) = \mathcal P_{f^{\#}}$, where
$f^{\#}(X) = f(V) - f(V \backslash X)$ is the submodular dual of $f$.
Furthermore, since $f^{\#}$ is a supermodular function, it holds that
$\partial_f(V)$ is a \emph{supermodular polyhedron} (for a
supermodular function $g$, the supermodular polyhedron
is defined as
$\mathcal P_g = \{x: x(X) \geq g(X), \forall X \subseteq V\}$).

Finally define:
$\partial_f^{\symmdiff(1, 1)}(X) = \{x \in \mathbb{R}^V : \forall j \in X, f(j | X \backslash j) \leq x(j) \text{ and } \forall j \notin X, f(j | X) \geq x(j)\}$.
Notice that $\partial_f^{\symmdiff(1, 1)}(X) \supseteq \partial_f(X)$
since we are reducing the constraints of the subdifferential. In
particular $\partial_f^{\symmdiff(1, 1)}(X)$ just considers $n$
inequalities, by choosing the sets $Y$ in the definition of $\partial_f(X)$ such
that $|Y \symmdiff X| = 1$ (i.e., Hamming distance one away from $X$). This polyhedron
will be useful in characterizing local minimizers of a submodular
function (see Section~\ref{submodminopt}) and motivating analogous
constructs for local maxima (see, for example,
Proposition~\ref{prop:local_max_poly}).

\subsection{Subdifferentials and Optimality Conditions} \label{submodminopt}
Fujishige~\cite{fujishige1984theory} provides some interesting characterizations to the optimality conditions for unconstrained submodular minimization, which can be thought of as a discrete analog to the KKT conditions. The result shows that a set $A \subseteq V$ is a minimizer of $f: 2^V \rightarrow \mathbb{R}$ if and only if: $\textbf{0} \in \partial_f(A)$. This immediately provides necessary and sufficient conditions for optimality of $f$: A set $A$ minimizes a submodular function $f$ if and only if $f(A) \leq f(B)$ for all sets $B$ such that $B \subseteq A$ or $A \subseteq B$. Analogous characterizations have also been provided for constrained forms of submodular minimization, and interested readers may look at~\cite{fujishige1984theory}. Finally, we can provide a simple characterization on the local minimizers of a submodular function. A set $A \subseteq V$ is a local minimizer\footnote{A set $A$ is a local minimizer of a submodular function if $f(X) \geq f(A), \forall X: |X \backslash A| \leq 1, \text{ and } |A \backslash X| = 1$, that is all sets $X$ no more than hamming-distance one away from $A$.} of a submodular function if and only if $\textbf{0} \in \partial_f^{\symmdiff(1, 1)}(A)$. As was shown in \cite{rkiyersemiframework2013}, a local minimizer of a submodular function, in the unconstrained setting, can be found efficiently in $O(n^2)$ complexity.

%\subsection{Discrete Separation Theorem} \label{submodfdtdstmin}
%The discrete seperation theorem known in context of convexity, states that given a convex function $\phi$ and a concave function $\psi$ such that $\forall x, \phi(x) \geq \psi(x)$, these exists a affine function $\langle h, x \rangle + c$ such that $\forall x, \psi(x) \geq \langle h, x \rangle + c \geq \psi(x)$. A similar relation holds for submodular functions. The lemma below was shown by Frank~\cite{frank1982algorithm} in the context of submodular functions:
%\begin{lemma}~\cite{frank1982algorithm} Given a submodular function $f$ and a supermodular function $g$ such that $f(X) \geq g(X), \forall X$ (and which satisfy $f(\emptyset) = g(\emptyset) = 0$), there exists a modular function $h$ such that $f(X) \geq h(X) \geq g(X)$. Furthermore, if $f$ and $g$ are integral so may be $h$.
%\end{lemma}
%This Lemma can also be shown through the \lovasz{} extension and the discrete seperation theorem for convex functions.

\section{Submodularity and Concavity}
In this section, we investigate several polyhedral aspects of submodular functions relating them to concavity, thus complementing the results from Section~\ref{subconvex}. This provides a complete picture on the relationship between submodularity, convexity and concavity.

\subsection{The submodular upper polyhedron}
\label{uppersubpolysec}
A first step in characterizing the concave aspects of a submodular function is the submodular upper polyhedron. Intuitively this is the set of tight modular upper bounds of the function,
and we define it as follows:
\begin{align}
\mathcal P^f = \{x \in \mathbb{R}^n: x(S) \geq f(S), \forall S \subseteq V\}
\label{eqn:submodular_upper_polyhedron}
\end{align}
The above polyhedron can in fact be defined for any set function. In particular, when $f$ is supermodular, we get what is known as the \emph{supermodular polyhedron} \cite{fujishige2005submodular}. 
Presently, we are interested in the case when $f$ is submodular and hence we call this the 
\emph{submodular upper polyhedron}. 
Interestingly this has a very simple characterization.
\begin{theorem}\label{antisubpoly}
Given a submodular function $f$, 
\begin{align}
\mathcal P^f = \{x \in \mathbb{R}^n: x(j) \geq f(j)\}
\label{eqn:submodular_upper_polyhedron_n_inequalities}
\end{align}
\end{theorem}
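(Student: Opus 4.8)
The plan is to prove the two set inclusions separately, with essentially all of the content residing in the reduction from the exponentially many defining inequalities of $\mathcal P^f$ to just the $n$ singleton inequalities.

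The inclusion $\mathcal P^f \subseteq \{x : x(j) \geq f(j), \forall j \in V\}$ is immediate: the singletons $S = \{j\}$ are themselves among the subsets appearing in the definition \eqref{eqn:submodular_upper_polyhedron}, so any $x \in \mathcal P^f$ satisfies $x(j) = x(\{j\}) \geq f(\{j\}) = f(j)$ for each $j$, with no use of submodularity required.

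The substantive direction is the reverse inclusion. Here I would fix an $x$ with $x(j) \geq f(j)$ for every $j \in V$ and show $x(S) \geq f(S)$ for every $S \subseteq V$. Since $x$ is modular, $x(S) = \sum_{j \in S} x(j)$, so it suffices to establish the subadditivity bound $f(S) \leq \sum_{j \in S} f(j)$ for a normalized submodular $f$; combining this with the hypothesis then gives $x(S) = \sum_{j \in S} x(j) \geq \sum_{j \in S} f(j) \geq f(S)$, which is exactly what is needed. The crux of the proof is therefore this subadditivity lemma, which I would derive directly from the diminishing-returns characterization of submodularity. Writing $S = \{j_1, \dots, j_k\}$ and telescoping the marginal gains yields $f(S) = f(\emptyset) + \sum_{i=1}^{k} f(j_i \mid \{j_1, \dots, j_{i-1}\})$; applying diminishing returns to each marginal gives $f(j_i \mid \{j_1, \dots, j_{i-1}\}) \leq f(j_i \mid \emptyset) = f(j_i)$, and summing produces $f(S) \leq \sum_{j \in S} f(j)$.

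The main obstacle is bookkeeping rather than conceptual: the telescoping step uses normalization ($f(\emptyset) = 0$) to identify $f(j_i \mid \emptyset)$ with $f(j_i)$, so I would either assume $f$ is normalized or carry the $f(\emptyset)$ terms explicitly, observing that the constraint at $S = \emptyset$ in \eqref{eqn:submodular_upper_polyhedron} already forces $f(\emptyset) \leq 0$ for $\mathcal P^f$ to be nonempty. The only genuine care needed is to ensure the chain $\emptyset \subseteq \{j_1\} \subseteq \{j_1,j_2\} \subseteq \cdots$ used in the telescoping is strictly increasing, so that the diminishing-returns inequality $f(j \mid A) \geq f(j \mid B)$ for $A \subseteq B$ applies at each step with $A = \emptyset$ and $B = \{j_1, \dots, j_{i-1}\}$.
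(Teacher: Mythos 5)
Your proposal is correct and follows essentially the same route as the paper: reduce the exponentially many constraints to the singleton inequalities via the subadditivity bound $\sum_{j \in S} f(j) \geq f(S)$. The only difference is that the paper invokes this subadditivity without proof, whereas you derive it explicitly by telescoping with diminishing returns, and you correctly flag the normalization issue ($f(\emptyset) = 0$) that the paper glosses over silently.
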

\begin{proof}
Given $x \in \mathcal P^f$ and a set $S$, 
we have 
$x(S) = \sum_{i \in S} x(i) \geq \sum_{i \in S} f(i)$, since $\forall i, x(i) \geq f(i)$
by Eqn.~\eqref{eqn:submodular_upper_polyhedron}.
Hence $x(S) \geq \sum_{i \in S} f(i) \geq f(S)$. Thus, the irredundant inequalities 
are the singletons.
\end{proof}

The submodular upper polyhedron has a particularly simple characterization due to 
the submodularity of $f$. In other words, 
this polyhedron is not \emph{polyhedrally tight} in that many of the 
defining inequalities are redundunt. 
This polyhedron alone is not particularly interesting to define a concave extension. 

We end this subsection by investigating the submodular upper polyhedron membership problem. Owing to 
its simplicity, this problem is particularly simple which might seem surprising at first 
glance since $x \in \mathcal P^f$ is equivalent to,
Eqn.~\eqref{eqn:submodular_upper_polyhedron},
checking if $\max_{X \subseteq V} [f(X) - x(X)] \leq 0$. This involves maximization of a submodular function
which is NP hard. However, surprisingly this particular instance is easy!
\begin{lemma}
Given a submodular function $f$ and vector $x$, let $X$ be a set such that $f(X) - x(X) > 0$. Then there exists an $i \in X: f(i) - x(i) > 0$.
\end{lemma}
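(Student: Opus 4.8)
The plan is to prove the contrapositive. Suppose that for every $i \in X$ we have $f(i) - x(i) \leq 0$, i.e. $f(i) \leq x(i)$ for all $i \in X$; I want to conclude that $f(X) - x(X) \leq 0$. The key observation is that this is almost exactly the situation covered by the proof of Theorem~\ref{antisubpoly}: there, starting from singleton inequalities $x(i) \geq f(i)$ one deduces $x(S) \geq f(S)$ for every $S$. Here the singleton inequalities only hold for $i \in X$, but that is all I need, since the set $S$ I care about is $X$ itself.

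First I would write $x(X) = \sum_{i \in X} x(i)$ and use the assumption $x(i) \geq f(i)$ for each $i \in X$ to get $x(X) \geq \sum_{i \in X} f(i)$. Second, I would invoke submodularity (more precisely, subadditivity on disjoint singletons, which follows from submodularity together with $f(\emptyset) \le 0$ or by summing the diminishing-returns inequalities along a chain building up $X$) to obtain $\sum_{i \in X} f(i) \geq f(X)$. Chaining these two inequalities yields $x(X) \geq f(X)$, i.e. $f(X) - x(X) \leq 0$, which contradicts the hypothesis $f(X) - x(X) > 0$. Hence there must exist some $i \in X$ with $f(i) - x(i) > 0$, as claimed.

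The only point requiring a little care is the step $\sum_{i \in X} f(i) \geq f(X)$, which is exactly the subadditivity used implicitly in Theorem~\ref{antisubpoly}; I expect this to be the main (though minor) obstacle, since one must be sure it follows from submodularity in the normalized setting rather than assuming it outright. Concretely, ordering $X = \{i_1, \dots, i_k\}$ and summing the submodular marginal inequalities $f(i_j) = f(i_j \mid \emptyset) \geq f(i_j \mid \{i_1, \dots, i_{j-1}\})$ telescopes the right-hand side to $f(X)$, giving the desired bound. Everything else is a direct summation, so the proof is essentially a one-line consequence of the reasoning already established for the submodular upper polyhedron.
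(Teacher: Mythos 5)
Your proof is correct and is essentially the paper's argument in contrapositive form: both reduce the claim to the subadditivity inequality $f(X) \leq \sum_{i \in X} f(i)$ (the paper writes it directly as $f(X) - x(X) \leq \sum_{i \in X} [f(i) - x(i)]$ and concludes some summand is positive). Your explicit telescoping justification of subadditivity, and your caveat that it requires the normalized setting $f(\emptyset) = 0$ (which the paper assumes implicitly and states in the subsequent corollary), is a careful touch the paper omits but changes nothing in substance.
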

\begin{proof}
Observe that $f(X) - x(X) \leq \sum_{i \in X} [f(i) - x(i)]$. Since the l.h.s.\ is greater than 0, it implies that $\sum_{i \in X} [f(i) - x(i)] > 0$. Hence there should exist an $i \in X$ such that $f(i) - x(i) > 0$.\looseness-1
\end{proof}
An interesting corollary of the above, is that it is in fact easy to check if the maximizer of a submodular function is greater than equal to zero. Given a submodular function $f$, the problem is whether $\max_{X \subseteq V} f(X) \geq 0$. This can easily be checked without resorting to submodular function maximization.\looseness-1
\begin{corollary}
Given a submodular function $f$ with $f(\emptyset) = 0$, $\max_{X \subseteq V} f(X) > 0$ if and only if there exists an $i \in V$ such that $f(i) > 0$.
\end{corollary}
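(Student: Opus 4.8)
The plan is to prove both implications, each of which follows quickly from material already in hand. For the ``if'' direction I would argue by monotonicity of the maximum over feasible sets: if some singleton satisfies $f(i) > 0$, then since $\{i\} \subseteq V$ is feasible we have $\max_{X \subseteq V} f(X) \geq f(\{i\}) = f(i) > 0$, which gives the claim immediately.

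For the ``only if'' direction, the key observation is that this is exactly the preceding Lemma specialized to the zero vector. Concretely, I would set $x = \mathbf{0}$, so that $x(S) = 0$ for every $S \subseteq V$. If $\max_{X \subseteq V} f(X) > 0$, then there is a set $X$ with $f(X) > 0$, equivalently $f(X) - x(X) > 0$. The Lemma then produces an element $i \in X$ with $f(i) - x(i) > 0$, i.e.\ $f(i) > 0$, as desired.

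I do not anticipate any real obstacle: the corollary is essentially a restatement of the Lemma at $x = \mathbf{0}$, and the hypothesis $f(\emptyset) = 0$ serves only to guarantee $\max_{X \subseteq V} f(X) \geq 0$, so that strict positivity is the meaningful case to analyze. The one point meriting a word of care is that the Lemma is phrased for an \emph{arbitrary} set $X$ with $f(X) - x(X) > 0$, not specifically for the maximizer; I would therefore simply invoke it on any witnessing set (for instance a maximizer) rather than appealing to optimality of that set.
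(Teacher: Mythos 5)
Your proposal is correct and matches the paper's intended argument: the paper presents this corollary as an immediate consequence of the preceding lemma, which is precisely your instantiation at $x = \mathbf{0}$ (together with the trivial ``if'' direction via $\max_{X \subseteq V} f(X) \geq f(\{i\})$). Your closing remark about invoking the lemma on any witnessing set rather than a maximizer is a sound and careful reading of the lemma's hypotheses.
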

This fact is true only for a submodular function. For general set functions, even when $f(\emptyset) = 0$, it could potentially require an exponential search to determine if $\max_{X \subseteq V} f(X) > 0$.

\subsection{The Submodular Superdifferentials}
\label{sec:subm-superd}

Given a submodular function $f$ we can characterize its superdifferential as follows. We denote for a set $X$, the superdifferential with respect to $X$ as $\partial^f(X)$. 
\begin{equation*}\label{supdiff-def}
\partial^f(X) = \{x \in \mathbb{R}^n: f(Y) - x(Y) \leq f(X) - x(X), \forall Y \subseteq V \}
\end{equation*}
This characterization is analogous to the subdifferential of a submodular function (section~\ref{sec:subm-subd}). This is also akin to the superdifferential of a continuous concave function.  

Each supergradient $g_X \in \partial^f(X)$, defines a modular upper bound of a submodular function. In particular, define $m^X(Y) = g_X(Y) + f(X) - g_X(X)$. Then $m^X(Y)$ is a modular function which satisfies $m^X(Y) \geq f(Y), \forall Y \subseteq X$ and $m^X(X) = f(X)$. We note that $(x(v_1), x(v_2), \dots, x(v_n)) =
(f(v_1),f(v_2),\dots,f(v_n)) \in
\partial^f(\emptyset)$ which shows that at least $\partial^f(\emptyset)$ exists.
A bit further below (specifically Theorem~\ref{altviewsthm2}) we show that for any submodular function, $\partial^f(X)$ is non-empty
for all $X \subseteq V$.

Note that the superdifferential is defined by an exponential (i.e., $2^{|V|}$) number of inequalities. However owing to the submodularity of $f$ and akin to the subdifferential of $f$, we can reduce the number of inequalities. Define three polyhedrons as: $\partial^f_1(X) = \{x \in \mathbb{R}^n: f(Y) - x(Y) \leq f(X) - x(X), \forall Y \subseteq X \}$, $\partial^f_2(X) = \{x \in \mathbb{R}^n: f(Y) - x(Y) \leq f(X) - x(X), \forall Y \supseteq X \}$, and $\partial^f_3(X) = \{x \in \mathbb{R}^n: f(Y) - x(Y) \leq f(X) - x(X), \forall Y: Y \not \subseteq X, Y \not \supseteq X\}$. A trivial observation is that: $\partial^f(X) = \partial^f_1(X) \cap \partial^f_2(X) \cap \partial^f_3(X)$. As we show below for a submodular function $f$, $\partial^f_1(X)$ and $\partial^f_2(X)$ are actually very simple polyhedra.
\begin{theorem}\label{superdiff-reducedineq}
For a submodular function $f$, 
\begin{align}
\partial^f_1(X) &= \{x \in \mathbb{R}^n: f(j | X \backslash j) \geq x(j), \forall j \in X\} \\
\partial^f_2(X) &= \{x \in \mathbb{R}^n: f(j | X) \leq x(j), \forall j \notin X\}.
\end{align}
\end{theorem}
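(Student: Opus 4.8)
The plan is to establish each set equality by a two-way inclusion, handling $\partial^f_1(X)$ and $\partial^f_2(X)$ separately but by the same symmetric argument. For $\partial^f_1(X)$ the claim is that among all the constraints indexed by $Y \subseteq X$, only the $n$ constraints coming from $Y = X \backslash j$ (for $j \in X$) are irredundant; the analogous statement for $\partial^f_2(X)$ is that only the constraints from $Y = X \cup j$ (for $j \notin X$) survive. Let me treat $\partial^f_1(X)$ in detail.

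Let me first check the easy inclusion. Take $x \in \partial^f_1(X)$, so $f(Y) - x(Y) \leq f(X) - x(X)$ for all $Y \subseteq X$. Applying this to $Y = X \backslash j$ gives $f(X \backslash j) - x(X \backslash j) \leq f(X) - x(X)$, which rearranges to $x(j) \leq f(X) - f(X \backslash j) = f(j \mid X \backslash j)$. So $x$ lies in the polyhedron on the right-hand side; the full subdifferential constraints are stronger than the $n$ marginal constraints, giving one direction immediately.

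The reverse inclusion is where submodularity does the work, and this is the step I expect to be the main obstacle. I would assume $x$ satisfies $x(j) \leq f(j \mid X \backslash j)$ for every $j \in X$ and show that the general inequality $f(Y) - x(Y) \leq f(X) - x(X)$ holds for arbitrary $Y \subseteq X$. The natural approach is to remove the elements of $X \backslash Y$ from $X$ one at a time along a chain $X = Z_0 \supset Z_1 \supset \cdots \supset Z_k = Y$, where $Z_{i+1} = Z_i \backslash j_{i+1}$. At each step I want to bound $f(Z_{i+1}) - x(Z_{i+1})$ above by $f(Z_i) - x(Z_i)$, i.e.\ show $x(j_{i+1}) \leq f(j_{i+1} \mid Z_{i+1}) = f(Z_i) - f(Z_{i+1})$. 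Here submodularity is essential: since $Z_{i+1} \subseteq X \backslash j_{i+1}$, the diminishing-returns property gives $f(j_{i+1} \mid Z_{i+1}) \geq f(j_{i+1} \mid X \backslash j_{i+1}) \geq x(j_{i+1})$, where the second inequality is our hypothesis. Telescoping the chain then yields $f(Y) - x(Y) \leq f(X) - x(X)$, completing this direction.

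The argument for $\partial^f_2(X)$ is entirely symmetric: the easy inclusion specializes the defining constraints to $Y = X \cup j$, and for the reverse inclusion I would build a chain growing $X$ up to an arbitrary superset $Y$ by adding elements one at a time, using submodularity in the form $f(j \mid Z) \leq f(j \mid X)$ whenever $Z \supseteq X$ to push each incremental marginal below $x(j)$. I could either repeat the telescoping verbatim with inequalities reversed, or simply remark that $\partial^f_2(X)$ is obtained from a $\partial^f_1$-type statement applied to the submodular dual $f^{\#}$, as already noted for the subdifferential case earlier in this section. Either way, the crux in both halves is the single use of diminishing marginal returns to compare a marginal taken at an intermediate set with the marginal taken at $X \backslash j$ (respectively $X$), and the chain/telescoping bookkeeping is routine.
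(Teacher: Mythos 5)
Your proof is correct and takes essentially the same approach as the paper: the paper likewise extracts the $n$ singleton constraints from the sets $Y$ with $|X \backslash Y| = 1$ (resp.\ $|Y \backslash X| = 1$) and shows they imply all the remaining constraints via the submodular inequality $\sum_{j \in X \backslash Y} f(j \mid X \backslash j) \leq f(X) - f(Y)$, which is exactly what your chain-and-telescope argument proves. The only difference is cosmetic --- you spell out the telescoping over a chain from $X$ to $Y$, while the paper aggregates the singleton bounds and invokes that inequality directly.
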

\begin{proof}
Consider $\partial^f_1(X)$. Notice that the inequalities defining the polyhedron can be rewritten as $\partial^f_1(X) = \{x \in \mathbb{R}^n:x(X \backslash Y) \leq f(X) - f(Y), \forall Y \subseteq X \}$. We then have that $x(X \backslash Y) = \sum_{j \in X \backslash Y} x(j) \leq \sum_{j \in X \backslash Y} f(j | X \backslash j)$, since $\forall j \in X, x(j) \leq f(j | X \backslash j)$ (this follows by considering only the subset of inequalities of $\partial^f_1(X)$ with sets $Y$ such that $|X \backslash Y| = 1$). Hence $x(X \backslash Y) \leq \sum_{j \in X \backslash Y} f(j | X \backslash j) \leq f(X) - f(Y)$. Hence 
an irredundant set of inequalities include those defined only through the singletons. 

 In order to show the characterization for $\partial^f_2(X)$, we have that $\partial^f_2(X) = \{x \in \mathbb{R}^n:x(Y \backslash X) \geq f(Y) - f(X), \forall Y \supseteq X \}$. It then follows that, $x(Y \backslash X) = \sum_{j \in Y \backslash X} x(j) \geq \sum_{j \in Y \backslash X} f(j | X)$, since $\forall j \notin X, x(j) \geq f(j | X)$. Hence $x(X \backslash Y) \geq \sum_{j \in Y \backslash X} f(j | X) \geq f(Y) - f(X)$, and again, an irredundant set of inequalities include those defined only through the singletons. 
\end{proof}

The above result significantly reduces the inequalities governing $\partial^f(X)$, and in fact, the polytopes $\partial^f_1(X)$ and $\partial^f_2(X)$ are very simple polyhedra. Recall that this is analogous to the submodular subdifferential (Theorem~\ref{subdiff-reduceineq}), where again owing to submodularity the number of inequalities are reduced significantly. In that case, we just need to consider the sets $Y$ which are subsets and supersets of $X$. It is interesting to note the contrast between the redunduncy of inequalities in the subdifferentials and the superdifferentials. In particular, here, the inequalities corresponding to sets $Y$ being the subsets and supersets of $X$ are mostly redundunt, while the non-redundunt ones are the rest of the inequalities. In other words, in the case of the subdifferential, $\partial_f^1(X)$ and $\partial_f^2(X)$ were non-redundunt, while $\partial_f^3(X)$ was entirely redundunt given the first two.
In the case of the superdifferentials, $\partial^f_1(X)$ and $\partial^f_2(X)$ are mostly internally redundant (they can be represented using only by $n$ inequalities), while $\partial^f_3(X)$ has no redundancy in general.

Unlike the subdifferentials, we cannot expect a closed form expression for the extreme points of $\partial^f(Y)$ (we provide several examples in the extended version of this paper). Moreover, they also seem to be hard to characterize algorithmically. For example, the superdifferential membership problem is NP hard.
\begin{lemma}\label{NPsuperdiffmembership}
Given a submodular function $f$ and a set $Y: \emptyset \subset Y \subset V$, the membership problem $y \in \partial^f(Y)$ is NP hard.
\end{lemma}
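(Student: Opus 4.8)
The plan is to recognize membership as an instance of unconstrained submodular maximization and then reduce a known NP-hard problem to it. Writing $h(Z) = f(Z) - y(Z)$, the vector $y$ lies in $\partial^f(Y)$ exactly when $f(Z) - y(Z) \le f(Y) - y(Y)$ for all $Z$, i.e.\ when $\max_{Z \subseteq V} h(Z) \le h(Y)$. Since $h$ is submodular (a modular perturbation of the submodular $f$) and $Z = Y$ already attains $h(Y)$, the test is equivalent to deciding whether the fixed set $Y$ is a global maximizer of a submodular function, or equivalently whether a given number upper-bounds $\max_Z h(Z)$. This is precisely the decision form of submodular maximization, which is NP-hard in general, and the goal is to turn this observation into a bona fide reduction.

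First I would reduce from \textsc{Max-Cut}: given a graph $G = (V', E)$ and an integer $t$, decide whether its maximum cut exceeds $t$. Put $V = V' \cup \{s\}$ for a fresh element $s$, set $Y = \{s\}$ (which satisfies $\emptyset \subset Y \subset V$), and query the zero vector $y = \mathbf 0$, so that $h = f$. The function $f$ is built from three pieces: the cut function $c(Z \cap V')$ of the part of $Z$ inside $V'$; a modular bonus $t\,\mathbf 1[s \in Z]$; and a penalty $-D \sum_{i \in V'} \mathbf 1[\{s,i\} \subseteq Z]$ with $D$ larger than $|E| + t$. For sets avoiding $s$ one has $f(Z) = c(Z)$, so their maximum is the max-cut value $C^{*}$; for sets containing $s$ the penalty makes $\{s\}$ the unique good choice, with $f(\{s\}) = t$. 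Hence $\max_Z f(Z) = \max(C^{*}, t)$ and $f(Y) = t$, so $y \in \partial^f(Y)$ holds iff $C^{*} \le t$, i.e.\ iff $G$ has no cut exceeding $t$.

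To finish I would verify submodularity of $f$: the cut term is submodular, the bonus is modular, and each penalty term $-\mathbf 1[\{s,i\}\subseteq Z]$ is submodular because the ``both present'' indicator $\mathbf 1[\{s,i\}\subseteq Z]$ is supermodular, so its negation is submodular and a sum of submodular functions is submodular. The step I expect to be the crux is exactly this gadget: one must decouple the comparison value $h(Y)$ from the genuinely hard maximum while keeping the whole function submodular. Naive attempts fail --- a modular shift moves $h(Y)$ and $\max_Z h(Z)$ together (and with threshold $t=0$ the problem even becomes trivial by the Corollary above), while capping a submodular function at $t$ is not submodular --- which is why the supermodular penalty term, isolating $\{s\}$ at the tunable value $t$, is essential. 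I would also note that the reduction targets the complement of membership, so the precise statement is coNP-hardness; this is the intractability the lemma asserts.
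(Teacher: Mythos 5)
Your proposal is correct, and it is genuinely more complete than the paper's own proof. The paper's entire argument is your opening paragraph: membership $y \in \partial^f(Y)$ is equivalent to deciding whether $Y$ maximizes the submodular function $h = f - y$, which the paper then simply declares to be ``the decision version of the submodular maximization problem'' and hence NP-hard. That declaration glosses over exactly the subtlety you identify: the standard decision version asks whether $\max_Z h(Z) \geq t$ for a free threshold $t$, whereas here the comparison value is pinned to $h(Y)$, and one must show these can be decoupled \emph{within} the class of submodular functions --- nontrivial, since a modular shift moves $h(Y)$ and $\max_Z h(Z)$ in lockstep, and by the paper's own Corollary the threshold-zero variant is polynomial-time. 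Your Max-Cut gadget supplies the missing construction, and it checks out: the indicator $\mathbf{1}[\{s,i\} \subseteq Z]$ is supermodular, so the penalty term is submodular and $f$ is submodular; for $Z \ni s$ with $Z \cap V' \neq \emptyset$ one has $f(Z) \leq |E| + t - D < t$, so the maximum over sets containing $s$ is exactly $f(\{s\}) = t$, giving $\mathbf{0} \in \partial^f(\{s\})$ if and only if $C^* \leq t$, which is coNP-hard to decide. Your final remark is also a genuine correction rather than pedantry: membership is a universally quantified condition (in coNP, with a violated inequality as a disqualifying certificate), so the reduction establishes coNP-hardness; this still justifies the lemma's informal ``NP hard'' under Turing reductions, since one oracle query for membership decides the Max-Cut complement. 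In short, you start from the same observation as the paper but convert its one-line assertion into a bona fide reduction, at the cost of a page of gadgetry; the paper's version buys brevity at the cost of eliding the threshold-decoupling step that your construction makes rigorous.
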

\begin{proof}
Notice that the membership problem $y \in \partial^f(Y)$ is equivalent to asking $\max_{X \subseteq V} f(X) - y(X) \leq f(Y) - y(Y)$. In other words, this is equivalent to asking if $Y$ is a maximizer of $f(X) - y(X)$ for a given vector $y$. This is the decision version of the submodular maximization problem and correspondingly is NP hard when $\emptyset \subset Y \subset V$. 
\end{proof}

Given that the membership problem is NP hard, it is also NP hard to solve a linear program over this polyhedron~\cite{grotschel1984geometric, schrijver2003combinatorial}.
The superdifferential of the empty and ground set, however, can be characterized easily:
\begin{lemma}
For any submodular function $f$ such that $f(\emptyset) = 0$, $\partial^f(\emptyset) = \{x \in \mathbb{R}^n: f(j) \leq x(j), \forall j \in V\}$. Similarly $\partial^f(V) = \{x \in \mathbb{R}^n: f(j | V \backslash j) \geq x(j), \forall j \in V\}$. %Furthermore, $\partial^f(\emptyset) = \mathcal P^f$ and $\partial^f(V) = \mathcal P^{f^{\#}}$.
\end{lemma}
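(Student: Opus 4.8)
The plan is to exploit the decomposition $\partial^f(X) = \partial^f_1(X) \cap \partial^f_2(X) \cap \partial^f_3(X)$ together with Theorem~\ref{superdiff-reducedineq}, and to observe that at the two extreme choices $X = \emptyset$ and $X = V$ most of the three pieces degenerate into the whole space $\mathbb{R}^n$, so that the superdifferential collapses onto a single simple polyhedron in each case.

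First I would handle $X = \emptyset$. Here $\partial^f_1(\emptyset)$ is defined by inequalities over sets $Y \subseteq \emptyset$, of which there is only $Y = \emptyset$, yielding the vacuous constraint $f(\emptyset) - x(\emptyset) \leq f(\emptyset) - x(\emptyset)$; likewise $\partial^f_3(\emptyset)$ imposes no constraints, since no set $Y$ satisfies $Y \not\supseteq \emptyset$. Hence both equal $\mathbb{R}^n$, and $\partial^f(\emptyset) = \partial^f_2(\emptyset)$. Applying the characterization of $\partial^f_2$ from Theorem~\ref{superdiff-reducedineq} gives the constraints $f(j | \emptyset) \leq x(j)$ for all $j \in V$, and since $f(\emptyset) = 0$ we have $f(j | \emptyset) = f(j)$, which delivers the claimed form.

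The case $X = V$ is symmetric. Now $\partial^f_2(V)$ is governed only by $Y \supseteq V$, i.e.\ $Y = V$, and is therefore vacuous, while $\partial^f_3(V)$ is again empty of constraints because no set satisfies $Y \not\subseteq V$. Thus $\partial^f(V) = \partial^f_1(V)$, and Theorem~\ref{superdiff-reducedineq} immediately yields $f(j | V \backslash j) \geq x(j)$ for all $j \in V$, as claimed.

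The computation is essentially bookkeeping, so there is no real obstacle; the one point that needs care is verifying that $\partial^f_3$ contributes no inequality at either extreme, which rests on the elementary fact that every subset of $V$ is comparable (under inclusion) to both $\emptyset$ and $V$, leaving the ``incomparable'' family defining $\partial^f_3$ empty.
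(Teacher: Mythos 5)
Your proof is correct and follows essentially the same route as the paper, which simply cites Theorem~\ref{superdiff-reducedineq}; you have merely made explicit the bookkeeping the paper leaves implicit, namely that $\partial^f_1$ (resp.\ $\partial^f_2$) and $\partial^f_3$ impose no constraints at $X=\emptyset$ (resp.\ $X=V$) because every subset of $V$ is comparable to both extremes. Your use of the normalization $f(\emptyset)=0$ to rewrite $f(j\,|\,\emptyset)$ as $f(j)$ is also exactly where that hypothesis is needed.
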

This lemma is a direct consequence of Theorem~\ref{superdiff-reducedineq}.

While it is difficult to characterize the superdifferentials for sets $\emptyset \subset X \subset V$, we can provide inner and outer bounds of the superdifferentials.

\subsubsection{Outer Bounds on the Superdifferential}
\label{sec:outer-bounds-superd}

It is possible to provide a number of useful and practical outer bounds on the superdifferential. Recall that $\partial^f_1(Y)$ and $\partial^f_2(Y)$ are already simple polyhedra. We can then provide outer bounds on $\partial^f_3(Y)$ that,
together with $\partial^f_1(Y)$ and $\partial^f_2(Y)$,
provide simple bounds on $\partial^f(Y)$. Define for $1 \leq k,l \leq n$: $\partial^f_{3, \symmdiff(k, l)}(X) = \{x \in \mathbb{R}^n: f(Y) - x(Y) \leq f(X) - x(X), \forall Y: Y \not \subseteq X, Y \not \supseteq X, |Y \backslash X| \leq k-1, |X \backslash Y| \leq l-1\}$.

We can then define the outer bound: 
\begin{align}
\partial^f_{\symmdiff(k, l)}(X) = \partial^f_1(X) \cap \partial^f_2(X) \cap \partial^f_{3, \symmdiff(k, l)}(X). 
\end{align}
Observe that $\partial^f_{\symmdiff(k, l)}(X)$ is expressed in terms of $O(n^{k+l})$ inequalities, and hence for a given $k, l$ we can obtain the representation of $\partial^f_{\symmdiff(k, l)}(X)$ in polynomial time. We will see that this provides us with a heirarchy of outer bounds on the superdifferential:
\begin{theorem}
For a submodular function $f$ the following hold: a) $\partial^f_{\symmdiff(1, 1)}(X) = \partial^f_1(X) \cap \partial^f_2(X)$, b) $\forall 1 \leq k^{\prime} \leq k , 1 \leq l^{\prime} \leq l, \partial^f(X) \subseteq \partial^f_{\symmdiff(k, l)}(X) \subseteq \partial^f_{\symmdiff(k^{\prime}, l^{\prime})}(X) \subseteq \partial^f_{\symmdiff(1, 1)}(X)$ and c) $\partial^f_{\symmdiff(n, n)}(X) = \partial^f(X)$.
\end{theorem}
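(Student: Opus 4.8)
The plan is to treat all three claims as bookkeeping statements about which linear inequalities enter each polyhedron. Every polyhedron here is an intersection of halfspaces indexed by certain sets $Y$, so enlarging the index set can only shrink the feasible region and shrinking it can only enlarge the region. A point worth flagging up front is that submodularity of $f$ plays no role in any of these three arguments; only the combinatorics of the index sets matters, since $\partial^f_1(X)$ and $\partial^f_2(X)$ appear unchanged in every polyhedron involved.

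For part a) I would observe that $\partial^f_{3, \symmdiff(1, 1)}(X)$ collects precisely the inequalities indexed by sets $Y$ with $Y \not\subseteq X$, $Y \not\supseteq X$, $|Y \backslash X| \leq 0$, and $|X \backslash Y| \leq 0$. The last two conditions force $Y = X$, which contradicts $Y \not\subseteq X$; hence the index set is empty and $\partial^f_{3, \symmdiff(1, 1)}(X) = \mathbb{R}^n$. Substituting into the definition $\partial^f_{\symmdiff(1, 1)}(X) = \partial^f_1(X) \cap \partial^f_2(X) \cap \partial^f_{3, \symmdiff(1, 1)}(X)$ yields the claim.

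For part b) I would note that, since $\partial^f_1(X) \cap \partial^f_2(X)$ is common to every $\partial^f_{\symmdiff(k, l)}(X)$, it suffices to compare the $\partial^f_{3, \symmdiff(\cdot, \cdot)}$ factors. When $k^{\prime} \leq k$ and $l^{\prime} \leq l$, the sets $Y$ satisfying $|Y \backslash X| \leq k^{\prime}-1$ and $|X \backslash Y| \leq l^{\prime}-1$ form a subcollection of those satisfying the looser bounds $\leq k-1$ and $\leq l-1$; hence $\partial^f_{3, \symmdiff(k, l)}(X)$ imposes a superset of the inequalities of $\partial^f_{3, \symmdiff(k^{\prime}, l^{\prime})}(X)$, giving $\partial^f_{\symmdiff(k, l)}(X) \subseteq \partial^f_{\symmdiff(k^{\prime}, l^{\prime})}(X)$. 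The leftmost inclusion $\partial^f(X) \subseteq \partial^f_{\symmdiff(k, l)}(X)$ holds because the index set of $\partial^f_{\symmdiff(k, l)}(X)$ is contained in all of $2^V$, and the rightmost inclusion into $\partial^f_{\symmdiff(1, 1)}(X)$ follows from part a) together with the fact that each $\partial^f_{\symmdiff(k^{\prime}, l^{\prime})}(X)$ is contained in $\partial^f_1(X) \cap \partial^f_2(X)$.

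The only step requiring genuine, if modest, care is part c), which I expect to be the main obstacle. Here I must verify that the Hamming caps $|Y \backslash X| \leq n-1$ and $|X \backslash Y| \leq n-1$ are vacuous, i.e.\ automatically satisfied by every $Y$ already constrained by $Y \not\subseteq X$ and $Y \not\supseteq X$, so that $\partial^f_{3, \symmdiff(n, n)}(X) = \partial^f_3(X)$. I would argue by contradiction: if $|Y \backslash X| = n$ then $Y \backslash X = V$, forcing $X = \emptyset$ and hence $Y \supseteq X$, contradicting $Y \not\supseteq X$; symmetrically, $|X \backslash Y| = n$ forces $X = V$ and $Y = \emptyset$, hence $Y \subseteq X$, contradicting $Y \not\subseteq X$. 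Thus both caps are non-binding, the index set of $\partial^f_{3, \symmdiff(n, n)}(X)$ coincides with that of $\partial^f_3(X)$, and intersecting with $\partial^f_1(X) \cap \partial^f_2(X)$ gives $\partial^f_{\symmdiff(n, n)}(X) = \partial^f_1(X) \cap \partial^f_2(X) \cap \partial^f_3(X) = \partial^f(X)$ by the decomposition noted just before the theorem.
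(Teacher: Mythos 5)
Your proposal is correct and follows essentially the same route as the paper, whose proof simply asserts that parts a) and c) follow from the definitions and that part b) follows because the polyhedra $\partial^f_{\symmdiff(k,l)}(X)$ tighten as $k,l$ grow; you have merely spelled out the index-set bookkeeping (empty index set for $(1,1)$, monotone nesting of index sets, vacuous caps at $(n,n)$) that the paper leaves implicit. Your side observation that submodularity is never used in these containments is accurate and consistent with the paper's argument, which likewise invokes only the definitions.
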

\begin{proof}
The proofs of items 1 and 3 follow directly from definitions. To see item 2, notice that the polyhedra $\partial^f_{\symmdiff(k, l)}$ become tighter as $k$ and $l$ increase finally approaching the superdifferential.
\end{proof}

\subsubsection{Inner Bounds on the Superdifferential}
\label{sec:inner-bounds-superd}

While it is hard to characterize the extreme points of the superdifferential, we can provide some specific supergradients.  
Define three vectors as follows:\looseness-1
\begin{align}
\ggrow_X(j) = 
\begin{cases}
f(j |X - j) & \text{ if }  j \in X\\
f(j) & \text { if } j \notin X \\
\end{cases} \\
\gshrink_X(j) = 
\begin{cases}
f(j |V - j) & \text{ if }  j \in X\\
f(j|X) & \text { if } j \notin X\\
\end{cases}\\
\gthree_X(j) = 
\begin{cases}
f(j |V - j) & \text{ if }  j \in X\\
f(j) & \text { if } j \notin X\\
\end{cases}
\end{align}

Then we have the following theorem:
\begin{theorem} \label{altviewsthm2} 
For a submodular function $f$, $\ggrow_X, \gshrink_X, \gthree_X \in \partial^f(X)$. Hence for every submodular function $f$ and set $X$, $\partial^f(X)$ is non-empty.
\end{theorem}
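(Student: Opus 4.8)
The plan is to exploit that the statement asserts membership of three explicit vectors, so non-emptiness follows the instant any one membership is verified; the whole theorem therefore reduces to checking, for each $g_X \in \{\ggrow_X, \gshrink_X, \gthree_X\}$ and every $Y \subseteq V$, the single defining inequality $f(Y) - g_X(Y) \leq f(X) - g_X(X)$. First I would normalize this constraint: writing $A = X \backslash Y$, $B = Y \backslash X$ and $C = X \inter Y$, we have $g_X(Y) - g_X(X) = g_X(B) - g_X(A)$, so the goal becomes
\[ f(Y) - f(X) \leq g_X(B) - g_X(A). \]
In each vector the coordinates on $A \subseteq X$ are \emph{deletion} marginals ($f(j|X\backslash j)$ for $\ggrow_X$; $f(j|V\backslash j)$ for $\gshrink_X$ and $\gthree_X$) and those on $B \subseteq V\backslash X$ are \emph{insertion} marginals ($f(j)$ for $\ggrow_X, \gthree_X$; $f(j|X)$ for $\gshrink_X$).

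Next I would telescope $f(Y)-f(X)$ along a chain from $X$ to $Y$, choosing the order of insertions and deletions to match the conditioning sets appearing in $g_X$. For $\ggrow_X$ and $\gthree_X$ I would pass through $C$ (first delete $A$ from $X$, then insert $B$ into $C$), obtaining $f(Y)-f(X) = f(B|C) - f(A|C)$. For $\gshrink_X$ I would instead pass through $X \union B$ (first insert $B$ into $X$, then delete $A$), obtaining $f(Y)-f(X) = f(B|X) - f(A|B\union C)$. Each aggregate marginal $f(B|\cdot)$ and $f(A|\cdot)$ is then expanded as a telescoping sum over its elements in a fixed order.

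The two pieces are then bounded term by term by diminishing returns. Every insertion marginal $f(j|S)$ satisfies $f(j|S) \leq$ its coordinate value, because $S$ contains the base conditioning set of that coordinate (e.g.\ $\emptyset \subseteq C\cup\dots$ gives $f(B|C) \leq \ggrow_X(B)$, and $X \subseteq X\cup\dots$ gives $f(B|X) \leq \gshrink_X(B)$), so the insertion term is at most $g_X(B)$. Dually, every deletion marginal $f(j|S)$ satisfies $f(j|S)\geq$ its coordinate value, because $S$ is contained in the largest admissible set not containing $j$ (e.g.\ $C\cup\dots \subseteq X\backslash j$ gives $f(A|C)\geq \ggrow_X(A)$, and $B\union C\cup\dots \subseteq V\backslash j$ gives $f(A|B\union C)\geq \gshrink_X(A)$), so the deletion term is at least $g_X(A)$. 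Substituting yields $f(Y)-f(X)\leq g_X(B)-g_X(A)$, the required inequality.

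I expect the main obstacle to be choosing the telescoping order correctly for each vector, so that every conditioning set is nested on the proper side of the conditioning set fixed by the coordinate definition; for $\gshrink_X$ one must delete $A$ only after inserting $B$ (conditioning on $B\union C$, so as to reach $V\backslash j$), whereas for $\ggrow_X$ and $\gthree_X$ deletion is from $C$ alone, and the wrong order would reverse a diminishing-returns inequality. A cleaner alternative route is via Theorem~\ref{superdiff-reducedineq}: membership of all three vectors in $\partial^f_1(X)$ and $\partial^f_2(X)$ is a single-element diminishing-returns check (using $X\backslash j \subseteq V\backslash j$ and $\emptyset \subseteq X$), leaving only the cross inequalities of $\partial^f_3(X)$, which is precisely the $A,B\neq\emptyset$ case of the telescoping argument above.
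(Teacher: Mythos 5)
Your proposal is correct and takes essentially the same route as the paper: the paper cites the Nemhauser--Wolsey--Fisher bounds $f(Y) \leq f(X) - \sum_{j \in X \backslash Y} f(j\,|\,X \backslash j) + \sum_{j \in Y \backslash X} f(j\,|\,X \cap Y)$ and $f(Y) \leq f(X) - \sum_{j \in X \backslash Y} f(j\,|\,X \cup Y \backslash j) + \sum_{j \in Y \backslash X} f(j\,|\,X)$ and loosens them by submodularity to obtain the three modular upper bounds certifying $\ggrow_X, \gshrink_X, \gthree_X \in \partial^f(X)$, and your telescoping-chain argument (deleting $A = X \backslash Y$ and inserting $B = Y \backslash X$ in the appropriate order, then bounding each marginal by diminishing returns) is precisely the standard derivation of those cited inequalities, so you have the same proof with the citation inlined. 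The only difference is that yours is self-contained, and your care about the chain ordering for $\gshrink_X$ (insert $B$ before deleting $A$, so that insertion marginals are conditioned on supersets of $X$) is exactly the point hidden inside the second cited bound.
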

\begin{proof}
For submodular $f$, the following bounds are known to hold~\cite{nemhauser1978}:
\begin{align*} 
 f(Y) \leq f(X) - \sum_{j \in X \backslash Y } f(j| X \backslash j) + \sum_{j \in Y \backslash X} f(j| X \cap Y), \\
 f(Y) \leq f(X) - \sum_{j \in X \backslash Y } f(j| X \cup Y \backslash j) + \sum_{j \in Y \backslash X} f(j | X) 
\end{align*}
Using submodularity, we can loosen these bounds further to provide tight modular:
\looseness-1
\begin{align*}
 f(Y) \leq f(X) - \sum_{j \in X \backslash Y } f(j | X - \{j\}) + \sum_{j \in Y \backslash X} f(j | \emptyset)  \\
 f(Y) \leq f(X) - \sum_{j \in X \backslash Y } f(j | V- \{j\}) + \sum_{j \in Y \backslash X} f(j | X) \\
 f(Y) \leq f(X) - \sum_{j \in X \backslash Y } f(j | V- \{j\}) + \sum_{j \in Y \backslash X} f(j | \emptyset). 
\end{align*}
From the three bounds above, and substituting the expressions of the supergradients, we may
immediately verify that these are supergradients, namely that $\ggrow_X, \gshrink_X, \gthree_X \in \partial^f(X)$.
\end{proof}

Next, we provide inner bounds using the super-gradients defined above. Define two polyhedra:
\begin{align}
\partial^f_{\emptyset}(X) &= \{x \in \mathbb{R}^n: f(j) \leq x(j), \forall j \notin X\}, \\
\partial^f_V(X) &= \{x \in \mathbb{R}^n: f(j | V \backslash j) \geq x(j), \forall j \in X\}.
\end{align}
Then define: $\partial^f_{i, 1}(X) = \partial^f_1(X) \cap \partial^f_{V}(X)$, $\partial^f_{i, 2}(Y) = \partial^f_2(Y) \cap \partial^f_{\emptyset}(Y)$ and $\partial^f_{i, 3}(Y) = \partial^f_V(Y) \cap \partial^f_{\emptyset}(Y)$. Then note that $\partial^f_{i, 1}(Y)$ is a polyhedron with $\ggrow_Y$ as an extreme point. Similarly $\partial^f_{i, 2}(Y)$ has $\gshrink_Y$, while $\partial^f_{i, 3}(Y)$ has $\gthree_Y$ as its extreme points. All these are simple polyhedra, with a single extreme point. Also define: $\partial^f_{i, (1, 2)}(Y) = \text{conv}(\partial^f_{i, 1}(Y), \partial^f_{i, 2}(Y))$, where $\text{conv}(., .)$ represents the convex combination of two polyhedra\footnote{Given two polyhedra $\mathcal P_1, \mathcal P_2$, $\mathcal P = \text{conv}(\mathcal P_1, \mathcal P_2) = \{\lambda x_1 + (1 - \lambda) x_2, \lambda \in [0, 1], x_1 \in \mathcal P_1, x_2 \in \mathcal P_2\}$}. Then $\partial^f_{i, (1, 2)}(Y)$ is a polyhedron which has $\ggrow_Y$ and $\gshrink_Y$ as its extreme points. The following lemma characterizes the inner bounds of the superdifferential:
\begin{lemma}
Given a submodular function $f$, 
\begin{align}
\partial^f_{i, 3}(Y) \subseteq \partial^f_{i, 2}(Y) \subseteq \partial^f_{i, (1, 2)}(Y) \subseteq \partial^f(Y), \\
\partial^f_{i, 3}(Y) \subseteq \partial^f_{i, 1}(Y) \subseteq \partial^f_{i, (1, 2)}(Y) \subseteq \partial^f(Y)
\end{align}
\end{lemma}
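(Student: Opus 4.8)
The plan is to reduce every inner-bound polyhedron to an explicit system of $n$ single-coordinate (box-type) inequalities, read off from Theorem~\ref{superdiff-reducedineq} together with the definitions of $\partial^f_{\emptyset}(Y)$ and $\partial^f_V(Y)$, and then verify each containment by comparing these bounds coordinate-by-coordinate. Concretely, each of $\partial^f_{i, 1}(Y), \partial^f_{i, 2}(Y), \partial^f_{i, 3}(Y)$ is an ``orthant-shaped'' polyhedron imposing an upper bound on $x(j)$ for $j \in Y$ and a lower bound on $x(j)$ for $j \notin Y$, with single vertex $\ggrow_Y$, $\gshrink_Y$, $\gthree_Y$ respectively. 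The inclusions among them then amount to comparing these per-coordinate bounds, while the final inclusion into $\partial^f(Y)$ is handled by convexity once the two building blocks $\partial^f_{i, 1}(Y)$ and $\partial^f_{i, 2}(Y)$ are placed inside $\partial^f(Y)$.

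First I would dispatch the inclusions among the three basic inner polyhedra using diminishing returns. For $j \in Y$ we have $Y \backslash j \subseteq V \backslash j$, so $f(j | V \backslash j) \leq f(j | Y \backslash j)$; for $j \notin Y$ we have $\emptyset \subseteq Y$, so $f(j) = f(j | \emptyset) \geq f(j | Y)$. Feeding these two facts into the box descriptions shows that the tighter bounds defining $\partial^f_{i, 3}(Y)$ imply the looser bounds defining $\partial^f_{i, 1}(Y)$ and $\partial^f_{i, 2}(Y)$, giving $\partial^f_{i, 3}(Y) \subseteq \partial^f_{i, 1}(Y)$ and $\partial^f_{i, 3}(Y) \subseteq \partial^f_{i, 2}(Y)$. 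The inclusions $\partial^f_{i, 1}(Y), \partial^f_{i, 2}(Y) \subseteq \partial^f_{i, (1, 2)}(Y)$ are immediate from the definition of $\text{conv}(\cdot,\cdot)$ by taking the mixing parameter $\lambda \in \{0, 1\}$.

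The substantive step, which I expect to be the main obstacle, is $\partial^f_{i, (1, 2)}(Y) \subseteq \partial^f(Y)$. Since $\partial^f(Y)$ is an intersection of halfspaces it is convex, and $\partial^f_{i, (1, 2)}(Y)$ is by definition the smallest convex set containing $\partial^f_{i, 1}(Y)$ and $\partial^f_{i, 2}(Y)$; hence it suffices to prove $\partial^f_{i, 1}(Y) \subseteq \partial^f(Y)$ and $\partial^f_{i, 2}(Y) \subseteq \partial^f(Y)$. For an arbitrary $x \in \partial^f_{i, 1}(Y)$ and arbitrary $Z \subseteq V$, I would split $x(Z) - x(Y) = \sum_{j \in Z \backslash Y} x(j) - \sum_{j \in Y \backslash Z} x(j)$ and bound it from below using the box inequalities of $\partial^f_{i, 1}(Y)$ (lower bounds $x(j) \geq f(j)$ on the coordinates in $Z \backslash Y$ and upper bounds $x(j) \leq f(j | Y \backslash j)$ on those in $Y \backslash Z$). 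The resulting lower bound equals, after moving $f(Y)$ across, the right-hand side of the first loosened modular upper bound established in the proof of Theorem~\ref{altviewsthm2}, which yields $f(Z) - f(Y) \leq x(Z) - x(Y)$, i.e.\ $x \in \partial^f(Y)$. The identical computation for $\partial^f_{i, 2}(Y)$ uses the second loosened bound of Theorem~\ref{altviewsthm2}.

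The delicate point is that we claim the \emph{entire} orthant $\partial^f_{i, 1}(Y)$ lies in $\partial^f(Y)$, not merely its vertex $\ggrow_Y$ (whose membership is exactly Theorem~\ref{altviewsthm2}). This works only because the sign pattern of the defining box --- upper bounds on the $Y$-coordinates, lower bounds on the complement-coordinates --- matches the sign of the coefficients $\mathbf{1}_{Z \backslash Y}$ and $-\mathbf{1}_{Y \backslash Z}$ appearing in $x(Z) - x(Y)$, so that replacing each $x(j)$ by its extreme admissible value can only decrease $x(Z) - x(Y)$. Once this monotonicity is observed, the worst case over the orthant is attained at $\ggrow_Y$ (resp.\ $\gshrink_Y$), and the loosened Nemhauser-type bounds close the argument for every $Z$ simultaneously. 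Chaining the established inclusions then produces both displayed containment chains.
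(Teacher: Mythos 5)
Your proof is correct and is precisely the expansion of the paper's one-line argument (``follows directly from the definitions of the supergradients, corresponding polyhedra and submodularity''): the coordinate-wise sign-matching observation --- that the box constraints bound the $Y$-coordinates above and the complement-coordinates below, exactly matching the signs in $x(Z)-x(Y)$, so the worst case over each orthant is its vertex --- combined with the loosened Nemhauser-type bounds from the proof of Theorem~\ref{altviewsthm2} is exactly the intended route, and your identification of $\partial^f_{i,(1,2)}(Y)\subseteq\partial^f(Y)$ via convexity as the only substantive step is right. One remark: you silently (and correctly) read the definitions as $\partial^f_{i,1}(Y)=\partial^f_1(Y)\cap\partial^f_{\emptyset}(Y)$ and $\partial^f_{i,2}(Y)=\partial^f_2(Y)\cap\partial^f_V(Y)$; the pairings as printed in the paper ($\partial^f_1\cap\partial^f_V$ and $\partial^f_2\cap\partial^f_{\emptyset}$) are typos, since each constrains only one side of the coordinates, could not have $\ggrow_Y$ resp.\ $\gshrink_Y$ as extreme points, and would make the lemma false --- so your corrected reading is the one under which both the extreme-point claims and the stated inclusions hold.
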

\begin{proof}
The proof of this lemma follows directly from the definitions of the supergradients, corresponding polyhedra and submodularity.
\end{proof}
Finally we point out interesting connections between $\partial_f(X)$ and $\partial^f(X)$. Firstly, it is clear from the definitions that $\partial_f(X) \subseteq \partial_f^{\symmdiff(1, 1)}(X)$ and $\partial^f(X) \subseteq \partial^f_{\symmdiff(1, 1)}(X)$. Notice also that both $\partial_f^{\symmdiff(1, 1)}(X)$ and $\partial^f_{\symmdiff(1, 1)}(X)$ are simple polyhedra with a single extreme point,
\begin{align}
\tilde{g}_X(j) = 
\begin{cases}
f(j |X - j) & \text{ if }  j \in X\\
f(j | X) & \text { if } j \notin X\\
\end{cases}
\end{align}
The point $\tilde{g}_X$, is in general, neither a subgradient nor a supergradient at $X$. However both the semidifferentials are contained within (different) polyhedra defined via $\tilde{g}_X$. Please refer to the extended version for more details and figures demonstrating this.

While it is hard to characterize superdifferentials of general submodular functions, certain subclasses have some nice characterizations. An important such subclass if the class of $M^{\natural}$-concave functions~\cite{murota2003discrete}. These include a number of special cases like matroid rank functions, concave over cardinality functions etc. In some sense, these functions very closely resemble concave functions. The following result provides a compact representation of the superdifferential of these functions.
\begin{lemma}
Given a submodular function $f$ which is $M^{\natural}$-concave on $\{0,1\}^V$, its superdifferential satisfies,
\begin{align}
\partial^f(X) = \partial^f_{\symmdiff(2, 2)}(X)
\end{align}
In particular, it can be characterized via $O(n^2)$ inequalities.
\end{lemma}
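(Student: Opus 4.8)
The plan is to prove the two inclusions separately. The inclusion $\partial^f(X) \subseteq \partial^f_{\symmdiff(2,2)}(X)$ is immediate since $\partial^f_{\symmdiff(2,2)}(X)$ is defined by a subset of the inequalities defining $\partial^f(X)$, so all the content lies in the reverse inclusion $\partial^f_{\symmdiff(2,2)}(X) \subseteq \partial^f(X)$. To establish it I would first translate both memberships into optimization statements about the set function $g_x(Y) := f(Y) - x(Y)$: a vector $x$ lies in $\partial^f(X)$ precisely when $X$ is a \emph{global} maximizer of $g_x$, while $x \in \partial^f_{\symmdiff(2,2)}(X)$ precisely when $X$ is a \emph{local} maximizer of $g_x$ with respect to single-element additions, single-element deletions, and one-for-one swaps. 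The lemma then reduces to the assertion that, for $M^\natural$-concave functions, this form of local optimality forces global optimality.

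Second, I would make the dictionary between the three constituent polyhedra and the local moves explicit. By Theorem~\ref{superdiff-reducedineq}, membership in $\partial^f_1(X)$ is equivalent to $f(j \mid X\setminus\{j\}) \geq x(j)$ for all $j \in X$, i.e.\ $g_x(X\setminus\{j\}) \leq g_x(X)$ (single deletions), and membership in $\partial^f_2(X)$ is equivalent to $f(j \mid X) \leq x(j)$ for all $j\notin X$, i.e.\ $g_x(X\cup\{j\}) \leq g_x(X)$ (single additions). For $k=l=2$ the defining constraints $Y\not\subseteq X$, $Y\not\supseteq X$, $|Y\setminus X|\leq 1$, $|X\setminus Y|\leq 1$ force $|Y\setminus X| = |X\setminus Y| = 1$, so $\partial^f_{3,\symmdiff(2,2)}(X)$ records exactly the swap inequalities $g_x((X\setminus\{i\})\cup\{j\}) \leq g_x(X)$ for $i\in X$, $j\notin X$. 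Hence $x\in\partial^f_{\symmdiff(2,2)}(X)$ says precisely that $X$ is locally optimal for $g_x$ in the add/delete/swap neighborhood, a condition expressed by $O(n^2)$ inequalities.

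Third, I would invoke discrete convex analysis. Because $f$ is $M^\natural$-concave and $M^\natural$-concavity is preserved under subtraction of a modular (linear) function, $g_x$ is again $M^\natural$-concave. The fundamental local-to-global optimality theorem for $M^\natural$-concave functions~\cite{murota2003discrete} states that a point is a global maximizer if and only if it admits no improvement by a single addition, a single deletion, or a one-for-one swap. Applying this to $g_x$ together with the local optimality established in the previous step yields $g_x(Y) \leq g_x(X)$ for all $Y\subseteq V$, that is, $x\in\partial^f(X)$. Combined with the trivial inclusion, this gives the claimed equality $\partial^f(X) = \partial^f_{\symmdiff(2,2)}(X)$, realized by $O(n^2)$ inequalities.

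The main obstacle I anticipate is not any single computation but the exact alignment of the $\symmdiff(2,2)$ neighborhood with the exchange neighborhood in the $M^\natural$ local optimality theorem, and verifying that precisely these moves (with no two-additions or two-deletions) suffice. This is exactly where $M^\natural$-concavity, and not mere submodularity, is essential, since for a general submodular $f$ local optimality over this neighborhood does not imply global optimality. A careful writeup would either cite the theorem in the form that uses exactly the add/delete/swap moves, or reprove it by the standard exchange argument: assuming a strictly better $Y$ exists, pick one minimizing $|X\symmdiff Y|$, apply the $M^\natural$ exchange axiom at an element of $X\setminus Y$, and use the local inequalities to produce a strictly better point closer to $X$, contradicting minimality.
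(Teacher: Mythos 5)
Your proof is correct, and it is worth noting that the paper itself states this lemma \emph{without} proof (the argument is deferred to the extended version~\cite{iyer2015polyhedral}), so there is no in-paper derivation to compare against; your route through discrete convex analysis is the natural and almost certainly the intended one. All three of your ingredients check out. First, the outer inclusion $\partial^f(X)\subseteq\partial^f_{\symmdiff(2,2)}(X)$ is indeed immediate, since the latter drops constraints. Second, your dictionary is accurate: with $g_x(Y)=f(Y)-x(Y)$, membership in $\partial^f_1(X)$ and $\partial^f_2(X)$ encodes single deletions and single additions (and the reduction of each to $n$ singleton inequalities via Theorem~\ref{superdiff-reducedineq} is legitimate here because every $M^{\natural}$-concave set function is submodular, a point worth stating explicitly), while for $k=l=2$ the conditions $Y\not\subseteq X$, $Y\not\supseteq X$, $|Y\setminus X|\le 1$, $|X\setminus Y|\le 1$ force $|Y\setminus X|=|X\setminus Y|=1$, so $\partial^f_{3,\symmdiff(2,2)}(X)$ records exactly the one-for-one swaps. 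Third, the two facts you import from~\cite{murota2003discrete} exist in exactly the form you need: $M^{\natural}$-concavity is preserved under subtracting a modular function, and the local optimality criterion for $M^{\natural}$-concave functions is stated precisely for the exchange moves $x\mapsto x-\chi_i+\chi_j$ with $i$ or $j$ permitted to be void, which on $\{0,1\}^V$ is exactly your add/delete/swap neighborhood — so the alignment you worried about in your final paragraph is not a gap, though a careful writeup should cite the theorem in that void-element form or include the short exchange induction you sketch. You also correctly identify where the hypothesis is essential: for merely submodular $f$, optimality over the $\symmdiff(2,2)$ neighborhood does not imply global optimality, consistent with the NP-hardness of membership in Lemma~\ref{NPsuperdiffmembership}. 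The inequality count is as claimed: $n$ deletion, $n$ addition, and at most $|X|\cdot|V\setminus X|\le n^2/4$ swap inequalities, hence $O(n^2)$.
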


\subsection{Optimality Conditions for submodular maximization}\label{submodmaxopt}
Just as the subdifferential of a submodular function provides optimality conditions for submodular miinimization, the superdifferential provides the optimality conditions for submodular maximization. 

We start with unconstrained submodular maximization:
\begin{equation}
\max_{X \subseteq V} f(X)
\end{equation}
Given a submodular function, we can give the KKT like conditions for submodular maximization: For a submodular function $f$, a set $A$ is a maximizer of $f$, if $\textbf{0} \in \partial^f(A)$.
However as expected, finding the set $A$, with the property above, or even verifying if for a given set $A$, $0 \in \partial^f(A)$ are both NP hard problems (from Lemma~\ref{NPsuperdiffmembership}). However thanks to the submodularity, we show that the outer bounds on the super-differential provide approximate optimality conditions for submodular maximization. Moreover, these bounds are easy to obtain.

\begin{proposition}
\label{prop:local_max_poly}
For a submodular function $f$, if $0 \in  \partial^f_{(1, 1)}(A)$ then $A$ is a local maxima of $f$ (that is, $\forall B \supseteq A, f(A) \geq f(B), \& \forall C \subseteq A, f(A) \geq f(C)$). Furthermore, if we define 
$S = \mbox{argmax}_{X \in \{ A, V \setminus A \}} f(A)$, then $f(S) \geq \frac{1}{3} OPT$ where $OPT$ is the optimal value. 
\end{proposition}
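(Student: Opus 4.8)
The plan is to prove the two assertions separately: first the local-optimality characterization, then the $\tfrac13$ guarantee, treating the former as a lemma that feeds the latter.

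For the local maximum claim, I would begin by unpacking the hypothesis $0 \in \partial^f_{\symmdiff(1,1)}(A) = \partial^f_1(A) \cap \partial^f_2(A)$. By Theorem~\ref{superdiff-reducedineq}, membership of the zero vector translates into the two scalar families $f(j \mid A \setminus j) \geq 0$ for every $j \in A$ and $f(j \mid A) \leq 0$ for every $j \notin A$. To promote these single-element conditions to the global superset/subset conditions, I would telescope. For $B \supseteq A$, order the elements of $B \setminus A$ and write $f(B) - f(A)$ as a sum of marginals $f(j \mid A \cup P_j)$, where $P_j$ is the prefix added before $j$; diminishing returns gives $f(j \mid A \cup P_j) \leq f(j \mid A) \leq 0$, so the sum is nonpositive and $f(B) \leq f(A)$. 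Symmetrically, for $C \subseteq A$, telescoping $f(A) - f(C)$ over $A \setminus C$ and using $f(j \mid C \cup P_j) \geq f(j \mid A \setminus j) \geq 0$ gives $f(A) \geq f(C)$. This part is routine, and I expect no real difficulty.

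For the approximation guarantee, let $O$ be a global maximizer, $OPT = f(O)$, and (as is standard for submodular maximization) assume $f$ nonnegative with $f(\emptyset)=0$. The key is to establish $f(O) \leq 2f(A) + f(V \setminus A)$. First apply the local-max result to the specific sets $A \cup O \supseteq A$ and $A \cap O \subseteq A$, obtaining $f(A \cup O) \leq f(A)$ and $f(A \cap O) \leq f(A)$. Decompose $O = O_1 \cup O_2$ with $O_1 = O \cap A$ and $O_2 = O \setminus A$; since $O_1,O_2$ are disjoint, submodularity yields $f(O) \leq f(O_1) + f(O_2)$. The term $f(O_1)$ is handled by the subset condition, $f(O_1) \leq f(A)$. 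For $f(O_2)$ I would apply submodularity to the pair $A \cup O_2$ and $V \setminus A$: their union is $V$ and their intersection is $O_2$, so $f(A \cup O_2) + f(V \setminus A) \geq f(V) + f(O_2)$. Since $A \cup O_2 = A \cup O \supseteq A$, the superset condition gives $f(A \cup O_2) \leq f(A)$, and nonnegativity discards the $f(V)$ term, leaving $f(O_2) \leq f(A) + f(V \setminus A)$. Combining, $f(O) \leq 2 f(A) + f(V \setminus A) \leq 3 \max\{f(A), f(V \setminus A)\} = 3 f(S)$, which rearranges to $f(S) \geq \tfrac13 OPT$.

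The main obstacle is the $f(O_2)$ estimate: unlike $O_1$, the piece $O_2$ lies in $V \setminus A$, which is itself not a local maximum, so the subset/superset conditions do not apply to it directly. The trick that makes it work is pairing $A \cup O_2$ with the complement $V \setminus A$, so that submodularity introduces exactly the $f(V \setminus A)$ term while the leftover $f(V)$ is killed by nonnegativity. I would flag that nonnegativity of $f$ (in particular $f(V) \geq 0$) is essential here; without it the $\tfrac13$ bound can fail, which is precisely why this assumption is standard in submodular maximization.
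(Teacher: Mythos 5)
Your proposal is correct, but it is substantially more self-contained than the paper's proof, which disposes of both claims in one line: the local-optimality part is declared to ``follow directly from the definition'' of $\partial^f_{\symmdiff(1,1)}(A)$, and the $\tfrac{1}{3}$ guarantee is obtained by citing Theorem~3.4 of Feige--Mirrokni--Vondr\'ak rather than proving it. Your telescoping argument is exactly the content hiding behind the paper's first clause (note it actually yields something stronger than Hamming-distance-one local optimality, namely $f(A) \geq f(B)$ for \emph{all} supersets and subsets, which is precisely the parenthetical claim in the statement); and your second part is, in essence, a reconstruction of the cited local-search lemma: the decomposition $O = (O \cap A) \cup (O \setminus A)$, the bound $f(O) \leq f(O_1) + f(O_2)$, and the pairing of $A \cup O_2$ with $V \setminus A$ to produce the $f(V \setminus A)$ term are the same ingredients as in the reference. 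What your version buys is transparency about hypotheses: you correctly flag that nonnegativity of $f$ (in particular $f(V) \geq 0$, and $f(\emptyset) \geq 0$ for the disjointness step, so full normalization $f(\emptyset)=0$ is not strictly needed) is essential for the $\tfrac{1}{3}$ bound, an assumption the proposition as stated in the paper leaves implicit and inherits silently from the cited theorem. What the paper's route buys is brevity and an explicit pointer to where the constant comes from. One cosmetic point: the statement's $S = \mathrm{argmax}_{X \in \{A, V\setminus A\}} f(A)$ contains a typo (it should read $f(X)$), and you silently, and correctly, read it as $f(S) = \max\{f(A), f(V \setminus A)\}$.
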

\begin{proof}
The local optimality condition follows directly from the definition of $\partial^f_{(1, 1)}(A)$ and the approximation guarantee follows from Theorem 3.4 in~\cite{janvondrak}. 
\end{proof}
The above result is interesting observation, since a very simple outer bound on the superdifferential, leads us to an approximate optimality condition for submodular maximization. We can also provide an interesting sufficient condition for the maximizers of a submodular function.
\begin{lemma}
If for any set $A$, $\textbf{0} \in \partial^f_{i,(1, 2)}(A)$, then $A$ is the global maxima of the submoduar function. In particular, if a local maxima $A$ is found (which is typically easy to do), it is guaranteed to be a global maxima, if it happens that $\textbf{0} \in \partial^f_{i,(1, 2)}(A)$. 
\end{lemma}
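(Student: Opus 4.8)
The plan is to reduce the claim directly to the sufficient optimality condition $\textbf{0} \in \partial^f(A)$ recorded earlier, and then to invoke the inner-bound inclusion established in the immediately preceding lemma. First I would unwind the definition of the superdifferential at the zero vector: since $\textbf{0}(Y) = 0$ for every $Y$, the membership $\textbf{0} \in \partial^f(A)$ says precisely that $f(Y) - 0 \leq f(A) - 0$ for all $Y \subseteq V$, i.e.\ $f(Y) \leq f(A)$ for every $Y$, which is exactly the statement that $A$ is a \emph{global} maximizer of $f$. This is the KKT-like condition already noted in Section~\ref{submodmaxopt}, so the entire task is to show that the hypothesis forces $\textbf{0}$ into the true superdifferential.

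Next I would appeal to the chain of inclusions from the inner-bound lemma, which gives $\partial^f_{i,(1, 2)}(A) \subseteq \partial^f(A)$. Because the inner bound is a \emph{subset} of the genuine superdifferential, membership in it is a one-directional (sufficient) certificate: combining the hypothesis $\textbf{0} \in \partial^f_{i,(1, 2)}(A)$ with this inclusion yields $\textbf{0} \in \partial^f(A)$, and the first step then immediately delivers global optimality. So the core argument is simply $\textbf{0} \in \partial^f_{i,(1, 2)}(A) \subseteq \partial^f(A) \Rightarrow A \in \arg\max_X f(X)$.

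For the ``in particular'' sentence I would stress the algorithmic reading. The polyhedron $\partial^f_{i,(1, 2)}(A) = \text{conv}(\partial^f_{i, 1}(A), \partial^f_{i, 2}(A))$ is explicitly describable with known extreme points (the supergradients $\ggrow_A$ and $\gshrink_A$), so testing whether $\textbf{0}$ lies in it is a small, polynomial-size feasibility check, in sharp contrast to testing $\textbf{0} \in \partial^f(A)$ directly, which is NP-hard by Lemma~\ref{NPsuperdiffmembership}. Hence once a local maximum $A$ has been produced cheaply (e.g.\ via the $O(n^2)$ local-search routine), one can run this efficient certificate and, when it succeeds, upgrade local to global optimality for free.

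There is no deep obstacle here: the statement is essentially a corollary of the inner-bound inclusion together with the definitional optimality condition. The only points that require care are directional: that $\textbf{0} \in \partial^f(A)$ encodes \emph{global} rather than merely local optimality, and that the inner bound sits \emph{inside} $\partial^f(A)$, so membership in it is a sufficient (not necessary) certificate --- which is exactly why it can certify but never refute global optimality.
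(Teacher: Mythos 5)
Your proposal is correct and follows the same route as the paper's own proof: both rest on the inner-bound inclusion $\partial^f_{i,(1,2)}(A) \subseteq \partial^f(A)$ together with the fact that $\textbf{0} \in \partial^f(A)$ is, by the definition of the superdifferential, exactly the statement that $A$ globally maximizes $f$. Your version merely spells out the definitional unwinding and the polynomial-time certificate interpretation in more detail than the paper does.
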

\begin{proof}
This proof follows from the fact that $\partial^f_{i,(1, 2)}(A) \subseteq \partial^f(A) \subseteq \partial^f_{(1, 1)}(A)$. 
Thus, if $\textbf{0} \in \partial^f_{i,(1, 2)}(A)$, it must also belong to $\partial^f(A)$, which means $A$ is the global optimizer of $f$.
\end{proof}
We can also provide similar results for constrained submodular maximization, which we omit in the interest of space. For more details see the longer version of this paper~\cite{iyer2015polyhedral}.

\bibliographystyle{./bibliography/IEEEtran}
\bibliography{./bibliography/IEEEabrv,./bibliography/IEEEexample}

% Generated by IEEEtran.bst, version: 1.12 (2007/01/11)
\begin{thebibliography}{10}
\providecommand{\url}[1]{#1}
\csname url@samestyle\endcsname
\providecommand{\newblock}{\relax}
\providecommand{\bibinfo}[2]{#2}
\providecommand{\BIBentrySTDinterwordspacing}{\spaceskip=0pt\relax}
\providecommand{\BIBentryALTinterwordstretchfactor}{4}
\providecommand{\BIBentryALTinterwordspacing}{\spaceskip=\fontdimen2\font plus
\BIBentryALTinterwordstretchfactor\fontdimen3\font minus
  \fontdimen4\font\relax}
\providecommand{\BIBforeignlanguage}[2]{{%
\expandafter\ifx\csname l@#1\endcsname\relax
\typeout{** WARNING: IEEEtran.bst: No hyphenation pattern has been}%
\typeout{** loaded for the language `#1'. Using the pattern for}%
\typeout{** the default language instead.}%
\else
\language=\csname l@#1\endcsname
\fi
#2}}
\providecommand{\BIBdecl}{\relax}
\BIBdecl

\bibitem{bilmeskarbasiISIT2018}
J.~Bilmes and A.~Karbasi, ``Submodularity in information and data science,'' in
  \emph{International Symposium on Information Theory Tutorial}, 2018.

\bibitem{fujishige1984subdifferential}
S.~Fujishige, ``On the subdifferential of a submodular function,''
  \emph{Mathematical programming}, vol.~29, no.~3, pp. 348--360, 1984.

\bibitem{iyer2015polyhedral}
R.~Iyer and J.~Bilmes, ``Polyhedral aspects of submodularity, convexity and
  concavity,'' \emph{arXiv preprint arXiv:1506.07329}, 2015.

\bibitem{wei2015submodularity}
K.~Wei, R.~Iyer, and J.~Bilmes, ``Submodularity in data subset selection and
  active learning,'' in \emph{International Conference on Machine Learning},
  2015, pp. 1954--1963.

\bibitem{liu2015svitchboard}
Y.~Liu, R.~Iyer, K.~Kirchhoff, and J.~Bilmes, ``Svitchboard ii and fisver i:
  High-quality limited-complexity corpora of conversational english speech,''
  in \emph{Sixteenth Annual Conference of the International Speech
  Communication Association}, 2015.

\bibitem{gygli2015video}
M.~Gygli, H.~Grabner, and L.~Van~Gool, ``Video summarization by learning
  submodular mixtures of objectives,'' in \emph{Proceedings of the IEEE
  conference on computer vision and pattern recognition}, 2015, pp. 3090--3098.

\bibitem{tsciatchek14image}
S.~Tschiatschek, R.~Iyer, H.~Wei, and J.~Bilmes, ``Learning mixtures of
  submodular functions for image collection summarization,'' in \emph{Neural
  Information Processing Society (NIPS)}, Montreal, CA, December 2014.

\bibitem{bairi2015summarization}
R.~Bairi, R.~Iyer, G.~Ramakrishnan, and J.~Bilmes, ``Summarization of
  multi-document topic hierarchies using submodular mixtures,'' in
  \emph{Proceedings of the 53rd Annual Meeting of the Association for
  Computational Linguistics and the 7th International Joint Conference on
  Natural Language Processing (Volume 1: Long Papers)}, 2015, pp. 553--563.

\bibitem{krause2008efficient}
A.~Krause, J.~Leskovec, C.~Guestrin, J.~VanBriesen, and C.~Faloutsos,
  ``Efficient sensor placement optimization for securing large water
  distribution networks,'' \emph{Journal of Water Resources Planning and
  Management}, vol. 134, no.~6, pp. 516--526, 2008.

\bibitem{fujishige2005submodular}
S.~Fujishige, \emph{Submodular functions and optimization}.\hskip 1em plus
  0.5em minus 0.4em\relax Elsevier Science, 2005, vol.~58.

\bibitem{fujishige1984theory}
------, ``Theory of submodular programs: A fenchel-type min-max theorem and
  subgradients of submodular functions,'' \emph{Mathematical programming},
  vol.~29, no.~2, pp. 142--155, 1984.

\bibitem{lovasz1983}
L.~Lov\'asz, ``Submodular functions and convexity,'' \emph{Mathematical
  Programming}, 1983.

\bibitem{janvondrak}
U.~Feige, V.~Mirrokni, and J.~Vondr{\'a}k, ``Maximizing non-monotone submodular
  functions,'' \emph{SIAM J. COMPUT.}, vol.~40, no.~4, pp. 1133--1155, 2007.

\bibitem{lee2009non}
J.~Lee, V.~Mirrokni, V.~Nagarajan, and M.~Sviridenko, ``Non-monotone submodular
  maximization under matroid and knapsack constraints,'' in \emph{STOC}.\hskip
  1em plus 0.5em minus 0.4em\relax ACM, 2009, pp. 323--332.

\bibitem{nemhauser1978}
G.~Nemhauser, L.~Wolsey, and M.~Fisher, ``An analysis of approximations for
  maximizing submodular set functions---i,'' \emph{Mathematical Programming},
  vol.~14, no.~1, pp. 265--294, 1978.

\bibitem{chekuri2011submodular}
C.~Chekuri, J.~Vondr{\'a}k, and R.~Zenklusen, ``Submodular function
  maximization via the multilinear relaxation and contention resolution
  schemes,'' \emph{STOC}, 2011.

\bibitem{sahni1974computationally}
S.~Sahni, ``Computationally related problems,'' \emph{SIAM Journal on
  Computing}, vol.~3, no.~4, pp. 262--279, 1974.

\bibitem{rkiyersemiframework2013}
R.~Iyer, S.~Jegelka, and J.~Bilmes, ``{Fast Semidifferential based Submodular
  function optimization},'' in \emph{ICML}, 2013.

\bibitem{jegelkacvpr}
S.~Jegelka and J.~Bilmes, ``Submodularity beyond submodular energies: coupling
  edges in graph cuts,'' in \emph{Computer Vision and Pattern Recognition
  (CVPR)}, 2011.

\bibitem{rkiyersubmodBregman2012}
R.~Iyer and J.~Bilmes, ``The submodular {B}regman and {L}ov\'asz-{B}regman
  divergences with applications,'' in \emph{NIPS}, 2012.

\bibitem{rkiyeruai2012}
------, ``Algorithms for approximate minimization of the difference between
  submodular functions, with applications,'' \emph{In UAI}, 2012.

\bibitem{nipssubcons2013}
------, ``{Submodular Optimization with Submodular Cover and Submodular
  Knapsack Constraints},'' in \emph{NIPS}, 2013.

\bibitem{grotschel1984geometric}
M.~Grotschel, L.~Lov{\'a}sz, and A.~Schrijver, ``Geometric methods in
  combinatorial optimization,'' in \emph{Silver Jubilee Conf. on
  Combinatorics}, 1984, pp. 167--183.

\bibitem{schrijver2003combinatorial}
A.~Schrijver, \emph{Combinatorial optimization: polyhedra and
  efficiency}.\hskip 1em plus 0.5em minus 0.4em\relax Springer Verlag, 2003,
  vol.~24.

\bibitem{murota2003discrete}
K.~Murota, \emph{Discrete Convex Analysis}.\hskip 1em plus 0.5em minus
  0.4em\relax Mathematical Programming, 2003.

\end{thebibliography}

\end{document}